\documentclass[11pt]{article}

\usepackage[margin=1in]{geometry}
\usepackage{amsmath, amssymb, amsthm}
\usepackage{threeparttable}

\theoremstyle{plain}

\newtheorem{theorem}{Theorem}[section]

\newtheorem{assumption}[theorem]{Assumption}

\theoremstyle{definition}

\theoremstyle{remark}

\usepackage{graphicx}
\usepackage{natbib}
\usepackage{setspace}
\usepackage{hyperref}
\usepackage{booktabs}
\usepackage{my-Equations}
\usepackage{float}

\begin{document}

\title{Scalable Structural Estimation of Networked Infrastructure: \\
Exact Decomposition for Localized Coordination}
\author{
    L. Kaili Diamond\thanks{lkdiamond@mines.edu} \\
    \textit{Colorado School of Mines}
    \and
    Ben Gilbert \\
    \textit{Colorado School of Mines}
}
\date{\today}
\maketitle
\begin{abstract}

Interaction effects are often economically central in environments where structural dynamic estimation becomes computationally infeasible. Under fixed group membership and sparse within-group interaction structure, the Bellman operator admits a block-diagonal decomposition that allows high-dimensional dynamic programs to be solved through independent group-level subproblems while preserving the original structural problem exactly. The result applies to a class of dynamic discrete choice models in which interactions are confined within stable local groups and state transitions depend only on within-group conditions. We apply the framework to replacement decisions across 14{,}344 GPU node locations in the Titan supercomputer, where operating environments differ systematically across cage positions. The structural estimates reveal significant spatial coordination: both neighboring failures and recent local replacement activity increase replacement incentives. Accounting for these interaction effects materially shifts predicted replacement timing and reveals significant misoptimization costs in benchmarks that assume conditional independence. More broadly, the results show how exploiting sparsity in interaction structures can make fully structural estimation feasible in large-scale networked systems without relying on simulation-based auxiliary moments or numerical approximation.

\end{abstract}

\section{Introduction}

Large-scale infrastructure systems require continual maintenance and replacement of aging capital assets. Operators collectively spend billions of dollars annually on maintenance and replacement decisions that affect not only asset lifespans, but also system reliability, efficiency, and operating costs. In many environments these decisions are inherently interconnected. Shared dynamics such as local failures, maintenance environments, and operational coordination opportunities can cause replacement incentives for one asset to depend directly on nearby activity within the surrounding infrastructure network.

Modeling these interactions is economically important but computationally
difficult. Structural dynamic discrete choice models (SDDCM) provide a natural
framework for analyzing forward-looking replacement behavior, yet
tractable implementations have typically relied on conditional
independence across assets. Once localized interaction effects enter the
Bellman operator, the dimensionality of the joint state space grows
rapidly with the number of interacting units, rendering conventional
nested fixed point estimation computationally infeasible in large-scale
systems. As a result, the environments in which coordination effects are
most economically important are often precisely those in which fully
structural estimation becomes infeasible.

This paper resolves that tension through an exact decomposition for dynamic replacement models with localized interaction structure. Under empirically meaningful conditions, the Bellman problem separates into independent group-level subproblems while preserving the original value functions, policy functions, and likelihood evaluations exactly. In the joint formulation, the state space grows multiplicatively with the number of interacting units, making value function iteration computationally infeasible in large systems. Under decomposition, computation instead scales with the number and size of groups, restoring tractability for a class of spatial structural models that would otherwise remain computationally infeasible. Because the decomposition operates directly on the Bellman operator, the resulting computational gains extend broadly to estimation methods that repeatedly evaluate the dynamic program, including nested fixed point, MPEC, simulation-based, and hybrid likelihood–moment approaches.

Economically, the decomposition restores coordination effects that are typically excluded from tractable high-dimensional models. Conditional independence frameworks treat actions as isolated events, omitting the influence of shared maintenance environments and local reliability conditions. Incorporating these interaction effects materially changes predicted replacement timing, failure incidence, and maintenance clustering, yielding a richer characterization of coordinated capital replacement behavior. The framework also accommodates persistent heterogeneity in operating environments, allowing otherwise identical assets exposed to different local conditions to exhibit distinct deterioration dynamics within the same localized interaction structure.

Applying the framework to GPU replacement decisions across 14,344 GPU locations organized into 479 local groups in the Titan supercomputer yields two main findings. First, the structural estimates reveal economically meaningful spatial coordination in replacement behavior: both local replacement activity and neighboring failures increase replacement incentives. Second, the spatial model generates materially different replacement dynamics and operational outcomes relative to non-spatial specifications that impose conditional independence.

The framework applies to dynamic decision problems with stable localized interaction structure, including electricity infrastructure, telecommunications networks, transportation fleets, distributed energy systems, and other forms of networked capital. The contribution of the paper is therefore both methodological and economic: it establishes an exact decomposition for spatial dynamic discrete choice problems while showing that economically meaningful coordination effects can be estimated fully structurally in large-scale infrastructure environments without approximation.

\section{Related Literature}\label{sec:lit}

This paper contributes to the literatures on structural dynamic discrete choice models, spatial interaction, computational methods for high-dimensional dynamic programming, and infrastructure maintenance.

\subsection{Structural Estimation of Dynamic Decisions}

The estimation of dynamic discrete choice models originates with \citet{rust1987optimal}, who introduced the nested fixed point (NFXP) algorithm for estimating forward-looking replacement decisions. While subsequent work has reduced computational burdens through mathematical programming with equilibrium constraints (MPEC) \citep{su_judd2012} or conditional choice probability (CCP) estimators \citep{hotz1993conditional}, these approaches generally remain tractable by maintaining conditional independence across agents or locations. Once localized interaction effects enter directly into the Bellman state space, dimensionality grows rapidly with the number of interacting units, making fully structural estimation infeasible in large systems.

This paper also relates to the literature on dynamic maintenance and
replacement decisions under heterogeneous deterioration. Structural
replacement models have been widely used to study capital renewal,
equipment maintenance, and durable asset management in settings where
operating costs evolve over time
\citep{rust1987optimal,benkard2000learning}. Much of this literature focuses on environments in which deterioration is unit-specific and decisions are conditionally independent. In contrast, we focus on infrastructure systems involving localized operational dependencies in which replacement incentives respond to neighboring failures and maintenance activity.

\subsection{Spatial Interaction and Identification}

A separate literature studies spatial dependence and peer effects in economic behavior. \citet{manski1993identification} emphasized the identification challenges created by social and spatial interactions, while \citet{bramoulle2009identification} showed that network structure can generate exclusion restrictions useful for identification. Empirical spatial econometric models typically incorporate such interactions using spatial lag or error specifications \citep{anselin1988spatial,lesage2009introduction}. However, these approaches are primarily static and do not model forward-looking optimization within a structural framework. While dynamic games can incorporate strategic interaction \citep{aguirregabiria2007sequential, bajari2007estimating}, they require solving for equilibrium behavior across agents, leading to substantial complexity. Our approach incorporates localized interaction effects without requiring strategic equilibrium computation across agents. Neighboring conditions affect decisions through the state environment within a fully structural single-agent dynamic programming framework.

\subsection{Numerical Approaches and Exact Decomposition}

Previous work has utilized numerical approximation methods to mitigate the curse of dimensionality in high-dimensional dynamic programming \citep{cai2010dynamic}. Control theory literature has explored exact decomposition under separable dynamics. Notably, \citet{TsakirisTarraf2015} derive algebraic conditions under which problems with linear dynamics can be decomposed into independent subproblems by exploiting invariant subspaces. Our decomposition instead arises from the controlled transition structure of a stochastic dynamic discrete choice model with fixed group membership and localized interactions. Rather than relying on linear dynamics or algebraic state-space transformations, the result exploits separability induced by localized transition dependence across stable groups.

The paper also builds on \citet{diamond2025economics}, which identifies spatial coordination in infrastructure replacement decisions using a hybrid NFXP–MSM framework. That approach incorporated interaction effects through simulation-based moments rather than exact likelihood estimation. The decomposition developed here removes this computational constraint by exploiting the block-diagonal structure of the Bellman operator to preserve the original likelihood problem and heterogeneous baseline specification exactly. As a result, fully structural estimation, likelihood-based inference, and counterfactual analysis become feasible in applications that were previously computationally intractable.

\subsection{Infrastructure Maintenance and Reliability}

Finally, the empirical application relates to the literature on maintenance and reliability in complex engineered systems. Operations research models typically focus on optimal policy design under known failure processes, rather than estimating structural parameters from observed behavior. The application also connects to work on hardware reliability in high-performance computing, specifically studies of the Titan supercomputer documenting systematic spatial variation in GPU failure patterns \citep{tiwari2015,ostrouchov2020gpu,min2025spatially}. This literature characterizes important reliability outcomes, but does not model the economic decision problem governing replacement behavior. The framework developed here integrates these reliability patterns into a structural model of forward-looking infrastructure maintenance, bridging the gap between engineering reliability and structural econometrics.

This paper contributes to these literatures by establishing an exact decomposition result for structural dynamic discrete choice models with localized interaction structure and fixed group membership. Under these conditions, the Bellman operator separates into independent group-level subproblems while preserving value functions, policy functions, and likelihood evaluations exactly. The decomposition therefore expands the class of spatial models that can be estimated fully structurally in practice.

\section{Structural Framework}\label{sec:model}

This section develops a structural dynamic replacement model with localized interaction effects across assets. The environment consists of stable local groups in which replacement incentives depend both on own-unit deterioration and local operating conditions. Fixed group membership implies that transition dynamics depend only on within-group conditions rather than on the full joint system state. The model is developed in general form so that the decomposition result extends broadly to dynamic systems with fixed local interaction structure.

\subsection{Economic Environment}

Consider a collection of infrastructure assets indexed by
$i = 1,\dots,N$ observed over discrete time periods
$t = 1,\dots,T$. In each period, operators choose whether to continue
operating an existing asset or replace it with a new unit. These
decisions are forward-looking because current replacement choices affect future operating conditions, reliability, and replacement incentives.

Assets often operate within localized physical or operational environments, such as servers within data-center racks, transformers within distribution circuits, or vehicles within maintenance depots. In these settings, replacement incentives depend not only on own-unit deterioration, but also on nearby operating conditions and maintenance activity arising from shared infrastructure, downtime costs, or maintenance constraints.

A central feature of these environments is that interactions are local rather than global. Assets belong to stable operational groups within which maintenance access, environmental exposure, and coordination opportunities are shared. Group membership remains fixed over time, implying that state transitions depend on local conditions within the group rather than on the full system state.

The framework incorporates interaction effects through the state environment rather than through explicit equilibrium interactions across agents. The key methodological contribution is to show that when group membership is fixed and transitions depend only on local conditions, the resulting dynamic program decomposes exactly into independent group-level problems.

\subsection{States, Groups, and Local Interaction Structure}

Each asset $i$ belongs to a fixed group $g(i) \in \mathcal{G}$, where groups represent shared conditions. Groups may correspond to physical neighborhoods, shared environments, maintenance regions, or any stable organizational structures governing local interaction. Fixed group membership arises naturally when physical location, maintenance access, or network topology remain stable over the planning horizon.

When group membership is assumed to remain fixed over time, state transitions for asset $i$ depend only on own-unit conditions and
local interaction variables generated within group $g(i)$. This fixed
group structure induces a sparse transition architecture in which
dependence is confined within groups rather than spanning the full
system.

Within each group, let $\mathcal{N}(i)$ denote the interaction neighborhood of asset $i$.
The neighborhood contains those assets whose operating conditions or
maintenance activity may affect the replacement incentives of asset
$i$. Interactions are assumed to be localized within groups, so that $\mathcal{N}(i) \subseteq g(i)$.

\newpage

At the beginning of period $t$, operators observe the state

\begin{equation}
s_{it}
=
(a_{it},\, c_i,\, f_{it},\, n^d_{i,t-1},\, n^f_{it}),
\end{equation}

where:

\begin{itemize}
    \item $a_{it}$ denotes the asset deterioration state, such as age,
    cumulative wear, or operating intensity,
    
    \item $c_i$ indexes the fixed depreciation state $i$,
    
    \item $f_{it}$ captures realized own-unit failure or operating conditions in period $t$,
    
    \item $n^d_{i,t-1}$ summarizes recent maintenance or replacement activity within the local neighborhood,

    \item $n^f_{it}$ summarizes contemporaneous local failure or operating shocks.
\end{itemize}

The fixed depreciation state $c_i$ allows otherwise similar assets to face different deterioration or reliability dynamics. For example, $c_i$ may represent climate exposure, wildfire risk, utilization intensity, or service environment. This state is distinct from the interaction group $g(i)$, which defines the fixed group structure governing localized interactions and the decomposition of the dynamic program. 

Lagged neighborhood activity $n^d_{i,t-1}$ is predetermined with respect to current-period decisions, while contemporaneous local shocks $n^f_{it}$ are treated as observed prior to choice. Consequently, local operating conditions enter the state before decisions are made.

This localized transition structure induces the block separability underlying the decomposition result developed below.

\subsection{Replacement Decisions and Utility}

In each period, operators observe the current state and idiosyncratic utility shocks before choosing an action $d_{it} \in \{0,1\}$, where

\[
d_{it}
=
\begin{cases}
0 & \text{continue operation}, \\
1 & \text{replace asset}.
\end{cases}
\]

The per-period utility from continued operation is

\begin{equation}
u(s_{it},0;\theta,\gamma)
=
\delta_{g_i} a_{it}
+
\theta_f f_{it}
+
\gamma_{lag} n^d_{i,t-1}
+
\gamma_{fail} n^f_{it},
\end{equation}

while replacement yields

\begin{equation}
u(s_{it},1;\theta,\gamma)
=
\theta_{RC}.
\end{equation}

The parameter vector $\theta$ governs own-state operating incentives,
while $\gamma = (\gamma_{lag}, \gamma_{fail})$ captures localized
interaction effects.

The coefficient $\gamma_{lag}$ measures how recent neighboring replacement activity affects continuation values, while $\gamma_{fail}$ captures the effect of nearby failures. Negative interaction coefficients imply that recent local maintenance activity or neighboring failures increase replacement incentives, consistent with coordinated maintenance behavior, shared setup costs, sequential replacement dynamics, or local reliability spillovers.

Environment-specific deterioration parameters $\delta_{gci}$ allow identical assets operating in different local environments to exhibit different depreciation or reliability dynamics.

\subsection{Dynamic Optimization Problem}

Operators are assumed to be forward-looking and maximize the expected
discounted value of utility over time. Let
$\beta \in (0,1)$ denote the discount factor.

Conditional on the observed state, the value function satisfies the
Bellman equation

\begin{equation}
V(s)
=
\max_{d \in \{0,1\}}
\left\{
u(s,d;\theta,\gamma)
+
\beta
\sum_{s'}
P(s' \mid s,d)
V(s')
\right\}.
\label{eq:bellman-general}
\end{equation}

Assuming additive Type-I extreme value shocks to utility yields the
standard logit representation for the integrated value function

\begin{equation}
\overline{V}(s)
=
\log
\sum_{d \in \{0,1\}}
\exp
\left[
u(s,d;\theta,\gamma)
+
\beta
\sum_{s'}
P(s' \mid s,d)
\overline{V}(s')
\right].
\label{eq:bellman-spatial}
\end{equation}

Because interaction variables enter the joint state space directly, conventional solution methods quickly become computationally infeasible in large systems.

The next section shows that fixed group membership and localized transition dependence induce a block-diagonal structure in the controlled transition kernel, allowing the Bellman problem to decompose into independent group-level subproblems.
    
\section{Bellman Decomposition}\label{sec:decomposition}

Infrastructure systems often exhibit stable localized interaction structure. Assets are frequently organized into persistent groups that share operating conditions, maintenance environments, or reliability exposure. Under these conditions, replacement dynamics depend primarily on local states rather than on the full system state.

\subsection{Localized Interaction Structure}

Figure~\ref{fig:decomposition} illustrates the logic of the decomposition. Rather than solving the Bellman problem over the global state space, the model can be solved through independent group-level subproblems. Under fixed group membership and localized transition dependence, combining these solutions reproduces the same value and policy functions as the joint problem.

\begin{figure}[H]
\centering

\renewcommand{\arraystretch}{1.4}

\begin{tabular}{c}

\fbox{
\parbox{0.78\textwidth}{
\centering
\textsc{Joint Dynamic Program}\\[0.2em]
Bellman equation with localized interaction structure
}
}

\\[1em]

$\Downarrow$

\\[1em]

\begin{tabular}{ccc}

\fbox{
\parbox{0.21\textwidth}{
\centering
\textsc{Group 1}\\[0.2em]
Group-level dynamic program
}
}
&
\fbox{
\parbox{0.21\textwidth}{
\centering
\textsc{Group 2}\\[0.2em]
Group-level dynamic program
}
}
&
\fbox{
\parbox{0.21\textwidth}{
\centering
\textsc{Group $G$}\\[0.2em]
Group-level dynamic program
}
}

\end{tabular}

\\[1em]

$\Downarrow$

\\[1em]

\fbox{
\parbox{0.78\textwidth}{
\centering
\textsc{Exact Reconstructed Solution}\\[0.2em]
Combining group-level solutions reproduces the same value and policy
functions as the original joint Bellman problem
}
}

\end{tabular}

\caption{Exact decomposition of the joint Bellman problem into independent group-level dynamic programs under fixed group membership and localized transition dependence.}
\label{fig:decomposition}

\end{figure}

Appendix~\ref{app:matrix-decomposition} provides a simple matrix illustration of how the decomposition removes only off-diagonal zero-block computations while preserving the original joint Bellman problem exactly.

\subsection{Exact Bellman Decomposition}

The following assumptions characterize environments in which the Bellman operator admits an exact decomposition into independent group-level subproblems. The result relies on fixed group membership, localized transition dependence, and additive separability of per-period utility.

\begin{assumption}[Fixed Group Membership]
  \label{asm:fixed-groups}
  Each asset location $i$ belongs to exactly one group $g(i) \in \mathcal{G}$
  for all periods $t$.
\end{assumption}

\begin{assumption}[Localized Transition Dependence]
  \label{asm:cond-indep}
  The transition kernel of group $g$ depends only on the state and
  actions within that group:
  \[
    P(\mathbf{s}_g' \mid \mathbf{s}, \mathbf{d}) =
    P(\mathbf{s}_g' \mid \mathbf{s}_g, \mathbf{d}_g).
  \]
\end{assumption}

\begin{assumption}[Additive Separability]
  \label{asm:separable}
  Per-period utility is additively separable across groups:
  \[
    u(\mathbf{s},\mathbf{d}) = \sum_{g=1}^{G} u_g(\mathbf{s}_g,\mathbf{d}_g).
  \]
\end{assumption}

These conditions imply that both the payoff function and the law of motion factor across groups, which is sufficient for separability of the Bellman operator.

\begin{theorem}[Exact Bellman Decomposition]
\label{thm:decomposition}

Under Assumptions~\ref{asm:fixed-groups}--\ref{asm:separable},
the Bellman operator for the joint dynamic program decomposes across
groups. The value and policy functions of the full system can be obtained
by solving a collection of independent group-level Bellman problems.

Consequently, under the maintained assumptions the decomposed and joint formulations yield identical value functions, policy functions, and likelihood evaluations.
\end{theorem}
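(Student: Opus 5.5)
The approach is a guess-and-verify argument on the joint Bellman operator, closed off by uniqueness of its fixed point. Write the joint state as $\mathbf{s}=(\mathbf{s}_1,\dots,\mathbf{s}_G)$ and the joint action as $\mathbf{d}=(\mathbf{d}_1,\dots,\mathbf{d}_G)$. Assumption~\ref{asm:fixed-groups} guarantees that this partition is well defined and time-invariant. Let $T$ denote the joint operator in \eqref{eq:bellman-general} (or its logit analogue \eqref{eq:bellman-spatial}). For each $g$, let $T_g$ be the group-level operator built from $u_g$ and the kernel $P(\mathbf{s}_g'\mid\mathbf{s}_g,\mathbf{d}_g)$ supplied by Assumption~\ref{asm:cond-indep}.

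First I record that $T$ and each $T_g$ are $\beta$-contractions in the sup norm on bounded functions, by Blackwell's conditions. Hence each has a unique fixed point, $V$ and $V_g$ respectively. I then interpret Assumption~\ref{asm:cond-indep} as saying that the joint kernel factors:
\[
P(\mathbf{s}'\mid\mathbf{s},\mathbf{d})=\prod_g P(\mathbf{s}_g'\mid\mathbf{s}_g,\mathbf{d}_g).
\]
This says both that group marginals are local and that groups transition independently given the current state and action. I will state this reading explicitly, since marginal locality alone does not deliver the product form.

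The key step is to show that $T$ maps additively separable functions to additively separable functions, and acts componentwise on them. Take $W(\mathbf{s})=\sum_g W_g(\mathbf{s}_g)$. By the product form and Fubini,
\[
\sum_{\mathbf{s}'}P(\mathbf{s}'\mid\mathbf{s},\mathbf{d})W(\mathbf{s}')=\sum_g \sum_{\mathbf{s}_g'}P(\mathbf{s}_g'\mid\mathbf{s}_g,\mathbf{d}_g)W_g(\mathbf{s}_g'),
\]
because each non-$g$ factor sums to one. Combining this with Assumption~\ref{asm:separable}, the objective inside the max is $\sum_g h_g(\mathbf{s}_g,\mathbf{d}_g)$. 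The choice set is the product $\prod_g \mathcal{D}_g$, so the max of a sum of terms, each depending on its own coordinate block, is the sum of the maxes. This gives $TW=\sum_g T_gW_g$.

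For the logit version I would instead place additive Type-I extreme value shocks on group-level actions, so the shocks are additively separable across groups. The same interchange then holds, since $\mathbb{E}\max$ over a product of independent components splits. I expect this to be the main obstacle. If the shocks are instead placed on each joint action $\mathbf{d}$, then the log-sum-exp over the product set still factors: $\log\sum_{\mathbf{d}}\prod_g e^{h_g}=\sum_g\log\sum_{\mathbf{d}_g}e^{h_g}$. The group solution then corresponds to a group logit model. I would first try this factorization identity and note that it needs no extra assumption beyond the product action set.

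Given the key step, $\sum_g V_g$ is a fixed point of $T$, since $T(\sum_g V_g)=\sum_g T_gV_g=\sum_g V_g$. By uniqueness, $V=\sum_g V_g$. The argmax, or equivalently the CCPs, factor in the same way, with $\Pr(\mathbf{d}\mid\mathbf{s})=\prod_g\Pr(\mathbf{d}_g\mid\mathbf{s}_g)$, so the joint policy is the collection of group policies. Finally, the likelihood is a product of CCPs and transition probabilities, so the joint log-likelihood equals the sum of the group log-likelihoods for every parameter value. The joint and decomposed formulations therefore yield identical value functions, policy functions, and likelihood evaluations, which completes the theorem.
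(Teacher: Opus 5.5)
Your proposal is correct and follows the paper's route: factor utility and transitions across groups, show that the joint operator acts componentwise on additively separable functions, and conclude $V=\sum_g V_g$ with factored policies. Along the way you make explicit several steps the paper leaves implicit, namely the uniqueness-of-fixed-point argument behind ``the fixed point inherits the same additive structure,'' the product action set, and the log-sum-exp factorization for the logit operator. One refinement: your componentwise step needs only linearity of expectation and the group marginals $P(\mathbf{s}_g'\mid\mathbf{s}_g,\mathbf{d}_g)$ that Assumption~\ref{asm:cond-indep} literally gives, so the product-form reading (which the paper also asserts directly from that assumption) is required only if the likelihood includes joint transition probabilities rather than just the CCPs.
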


\begin{proof}
Let $\mathbf{s} = (\mathbf{s}_1,\ldots,\mathbf{s}_G)$ denote the joint
state vector, where $\mathbf{s}_g$ collects the states of assets in
group $g$. Let $\mathbf{d} = (\mathbf{d}_1,\ldots,\mathbf{d}_G)$
denote the vector of actions.

The Bellman operator for the joint dynamic program is
\[
  \mathcal{T}V(\mathbf{s})
  =
  \max_{\mathbf{d}}
  \Bigg\{
    u(\mathbf{s},\mathbf{d})
    +
    \beta
    \sum_{\mathbf{s}'}
    P(\mathbf{s}' \mid \mathbf{s},\mathbf{d})\,V(\mathbf{s}')
  \Bigg\}.
\]

Under Assumption~\ref{asm:separable}, utility is additively separable:
\[
  u(\mathbf{s},\mathbf{d}) = \sum_{g=1}^{G} u_g(\mathbf{s}_g,\mathbf{d}_g).
\]

Under Assumption~\ref{asm:fixed-groups}, assets remain in the same
group over time. Under Assumption~\ref{asm:cond-indep}, the transition
kernel factorizes across groups:
\[
  P(\mathbf{s}' \mid \mathbf{s},\mathbf{d})
  =
  \prod_{g=1}^{G}
  P_g(\mathbf{s}'_g \mid \mathbf{s}_g,\mathbf{d}_g).
\]

Separability of utility, transitions, and feasible actions implies that the Bellman operator can be written as a collection of group-level operators
\[
  \mathcal{T}_g V_g(\mathbf{s}_g)
  =
  \max_{\mathbf{d}_g}
  \left\{
    u_g(\mathbf{s}_g,\mathbf{d}_g)
    +
    \beta
    \sum_{\mathbf{s}'_g}
    P_g(\mathbf{s}'_g \mid \mathbf{s}_g,\mathbf{d}_g)\,
    V_g(\mathbf{s}'_g)
  \right\}.
\]

The resulting fixed point $V_g^*$ therefore inherits the same additive structure.

\[
  V^*(\mathbf{s}) = \sum_{g=1}^{G} V_g^*(\mathbf{s}_g).
\]

Thus solving the collection of group-level Bellman problems produces
the same fixed point, and therefore the same value and policy
functions, as solving the full joint Bellman equation.

\end{proof}

This equivalence implies that likelihood evaluation and policy simulation can be conducted using the decomposed representation while preserving the original joint likelihood problem exactly. The decomposition does not imply separate estimation across groups or a pooling of parameters. Rather, it provides a computationally feasible way to estimate the full structural model using the joint behavior of all units. Consequently, the decomposition is not a numerical approximation or reduction in model scope, but an exact representation of the same economic problem under empirically meaningful restrictions.


\subsection{Computational Implications}


Consider a dynamic program satisfying Assumptions~\ref{asm:fixed-groups}--\ref{asm:separable}. In the joint formulation, Bellman iteration scales with the full joint state space, whose dimensionality grows multiplicatively with the number of interacting units. Under exact decomposition, the problem instead separates into independent group-level Bellman operators.

Consequently, computation scales with the number and size of groups rather than the full joint state space. In the joint formulation, Bellman iteration scales as $\mathcal{O}\!\left((\prod_{g=1}^{G} |\mathcal{S}_g|)^2\right)$. Under decomposition, the problem reduces to $\mathcal{O}\!\left(\sum_{g=1}^{G} |\mathcal{S}_g|^2\right)$ group-level computations.

Because the decomposition operates directly on the Bellman operator, the computational gains extend to estimation methods that require repeated solution of the dynamic program, including nested fixed point (NFXP), mathematical programming with equilibrium constraints (MPEC), simulation-based estimators, and hybrid likelihood–moment methods.

These gains are important not only for estimation, but also for policy analysis and inference. Bootstrap inference, robust standard errors, counterfactual simulation, sensitivity analysis, and model comparison all require repeated evaluation of the structural model. By reducing the cost of the Bellman solution, the decomposition makes these procedures operationally feasible at the scale of modern infrastructure systems within a fully structural framework.
\section{Empirical Application: Titan Supercomputer}

\label{sec:application}
The assumptions underlying Theorem~\ref{thm:decomposition} map naturally into the Titan application. Assets operate within stable cage environments that generate persistent heterogeneity in operating conditions and localized interaction structure. The empirical specification therefore defines interaction variables within cages and restricts state transitions to depend on local operating conditions rather than the full system state. These conditions align closely with the institutional structure of the Titan supercomputer, whose administrative records provide detailed information on failures, maintenance activity, and replacement behavior across GPU locations.

\subsection{Institutional Environment}

The empirical analysis uses administrative records from the Titan supercomputer operated at Oak Ridge National Laboratory (ORNL) between 2012 and 2019. Titan was a large-scale GPU-accelerated computing system with approximately 19{,}200 node positions representing potential GPU installation locations. These GPUs served as the primary computational units for scientific workloads and constitute the capital equipment studied in this analysis.

Titan provides a useful setting for studying infrastructure replacement because it combines large-scale capital deployment with stable physical layout and detailed operational records. GPUs operate continuously under heterogeneous thermal conditions, experience nontrivial failure risk over time, and are maintained within localized maintenance environments. These features generate both persistent heterogeneity in deterioration dynamics and opportunities for coordinated maintenance activity.

The system’s physical architecture defines a natural hierarchal spatial asset structure.
GPUs are installed within cabinets, and each cabinet contains three
vertically stacked cages. Each cage houses up to 32 GPUs that share
cooling, power, and network infrastructure. The unit of observation is a
node, representing a physical GPU location.

The facility architecture follows a nested hierarchy:
\begin{center}
\textbf{Facility} $\rightarrow$ \textbf{Cabinet} $\rightarrow$ \textbf{Cage} $\rightarrow$ \textbf{Node}
\end{center}

In the empirical analysis, the node defines the decision unit, while the
cage defines the local operating environment. GPUs within a cage share
common thermal conditions and maintenance access, making the cage the
relevant level at which local interactions arise. Maintenance operations are also naturally organized at this level. The localized service structure creates opportunities for coordinated maintenance and replacement activity within cages.

A key feature of Titan is a vertical thermal gradient across cages within
a cabinet. Due to floor-up cooling, GPUs in higher cages operate at
systematically higher temperatures than those in lower cages. Therefore,
cages are classified as $\text{cage} \in \{0,1,2\}$ corresponding to bottom
(coolest), middle, and top (hottest) positions. This creates persistent
heterogeneity in operating conditions across otherwise identical
hardware. Because GPUs are not randomly reassigned across cages during operation, these thermal environments remain persistent over time. The resulting variation generates systematic differences in deterioration and failure risk across otherwise homogeneous assets.

Several features of the system support identification. GPUs were deployed
with homogeneous hardware specifications, eliminating baseline quality
differences. Locations are fixed by the physical layout of the system,
so thermal environments are exogenously assigned. Administrative records
provide detailed information on installation, removal, and failure events
throughout the lifetime of the system. The records additionally contain precise location identifiers linking each GPU to its cabinet and cage position, allowing replacement and failure behavior to be analyzed within stable local neighborhoods over time.

These features create an environment in which identical assets operate under systematically different local conditions within fixed groups, making Titan a useful setting for analyzing interaction effects in dynamic maintenance decisions.

\subsection{Panel Construction}

The raw administrative logs record installation events, failure events,
and removals at the GPU level. These records are converted into a monthly
panel tracking the state of each GPU location over time. These records are linked across serial numbers and physical locations to
construct a longitudinal history of equipment operation and replacement
behavior at the node level. Monthly aggregation preserves the timing of replacement and failure
dynamics relevant for forward-looking maintenance decisions while
smoothing high-frequency operational noise.

Following \citet{ostrouchov2020gpu}, replacement events associated with manufacturer defects and coordinated warranty refresh cycles are excluded from the analysis. These episodes reflect centralized interventions rather than decentralized operational decisions. The final decommissioning phase of the system is also excluded.

Additional cleaning ensures a consistent mapping between locations and
replacement decisions. Overlapping installation and removal records are
resolved so that each location contains at most one GPU per period.
GPUs that are relocated and reused as backfill units are restricted to
their first observed stint, ensuring that replacement decisions are
defined consistently at the location level.

For estimation, the sample is further restricted to locations with complete coverage over a common analysis window, yielding 378{,}124 location-month observations across 14{,}344 observed GPU locations organized into 479 cage-level groups.

\subsection{Scope and Empirical Limitations}

The Titan application provides a valuable empirical setting for studying
localized replacement behavior because it combines detailed
administrative records with stable physical infrastructure and observable
maintenance activity at scale. Such data are rarely available outside
large institutional or government-operated systems, particularly at the
node level required for structural estimation of localized interaction
effects.

The empirical results should be interpreted as evidence from a specific institutional environment rather than as a
complete characterization of modern computing infrastructure. Titan represents a single large-scale system operating during
a particular technological generation, and the hardware environment
differs substantially from contemporary hyperscale AI deployments.

In addition, Titan approached decommissioning during the estimation
window, which likely weakened the empirical relevance of stronger
forward-looking Markov dynamics. Consequently, operators may have placed
less weight on long-run continuation values than would be expected in a
stationary environment with an indefinite planning horizon. As a result,
the estimated dynamics should not be interpreted as fully representative
of replacement behavior in perpetually operating infrastructure systems.

Nevertheless, Titan provides an important proof-of-concept environment for studying localized coordination in large-scale infrastructure systems. The empirical application demonstrates that sufficiently detailed operational data can support fully structural estimation of localized interaction effects.

\subsection{Empirical Specification}\label{sec:empirical}

This section describes the empirical specification used to estimate the structural model in the Titan application. The discussion introduces the state variables, transition process, neighborhood structure, and timing assumptions.

\subsubsection{Empirical Neighborhood Structure}

The cabinet-and-cage structure defines the relevant local neighborhood for each unit. For a GPU located at node $i$, the neighborhood consists of other GPUs within the same cage of the same cabinet. This structure captures both localized operating conditions and the feasible scope of maintenance coordination. The resulting neighborhood variables therefore reflect both correlated environmental exposure and localized maintenance organization.

\begin{samepage}
Two variables summarize local neighborhood conditions. First, lagged replacement activity is defined as
\begin{align}
n_{i,t-1}^{d} = \mathbf{1}\{\text{any other GPU in the same cage was replaced at } t-1\}.
\end{align}
This captures sequential replacement behavior following recent
maintenance activity.
\end{samepage}
Second, contemporaneous failure activity is defined as
\begin{align}
n_{it}^{f} = \mathbf{1}\{\text{any other GPU in the same cage failed at } t\}.
\end{align}
This captures local reliability shocks that may affect replacement
incentives.

Lagged replacement activity is predetermined with respect to the
decision at time $t$, while failure realizations are assumed to be
observed prior to the replacement decision within the period. Together,
these variables summarize the local conditions faced by each unit while
preserving a clear temporal ordering between information and choice.

\subsubsection{Empirical State Space}

The empirical state for GPU location $i$ in month $t$ is

\[
s_{it}
=
(a_{it},\, c_i,\, f_{it},\, n^d_{i,t-1},\, n^f_{it}),
\]

where $a_{it}$ denotes GPU age in months, $c_i \in \{0,1,2\}$ indexes cage position within the cabinet, $f_{it}$ indicates a failure event, and the neighborhood variables summarize local replacement and failure conditions within the same cage.

\subsubsection{Transition Structure and Timing}

The model evolves in discrete monthly periods. Within each period, operators observe the current operating state of each asset and local neighborhood conditions before choosing whether to continue operation or replace the asset.

The within-period timing proceeds as follows:

\begin{enumerate}
\item Current-period operating states and failure realizations are observed.
\item Operators observe local neighborhood conditions, including nearby failures and prior period replacement activity within the same cage.
\item Replacement decisions are made.
\item State variables evolve according to the controlled transition process.
\end{enumerate}

The transition kernel therefore combines deterministic and stochastic
components. Age evolves deterministically conditional on replacement decisions, while failure events evolve stochastically conditional on the current state. Transition probabilities are estimated directly from the data.

Neighborhood interaction variables evolve endogenously through observed current-period failure activity and lagged replacement activity within the local interaction group. Contemporaneous failure measures exclude the focal unit in order to separate local operating conditions from own-unit failure realizations.

Because interaction neighborhoods are localized within fixed groups, future neighborhood states depend only on transitions occurring within the same group.

Transition probabilities are estimated nonparametrically from monthly state transitions in the panel data. The resulting transition kernel reflects the empirical evolution of deterioration, failures, and local neighborhood activity within the infrastructure system.

\section{Estimation Procedure}\label{sec:estimation4}

The model is estimated by maximum likelihood using a nested fixed
point (NFXP) algorithm. For a given parameter vector
$(\theta,\gamma)$, the Bellman equation in
\eqref{eq:bellman-spatial} is solved to obtain the expected value
function $\EV(\st)$.

Conditional choice probabilities (CCPs) follow from the Type-I extreme
value assumption:
\begin{equation}
  P(d \mid \st;\theta,\gamma)
  =
  \frac{
    \exp\!\left[
      u(\st,d;\theta,\gamma)
      +
      \beta \sum_{s'} P(s' \mid \st,d)\,\EV(s')
    \right]
  }{
    \sum_{d' \in \{0,1\}}
    \exp\!\left[
      u(\st,d';\theta,\gamma)
      +
      \beta \sum_{s'} P(s' \mid \st,d')\,\EV(s')
    \right]
  }.
\end{equation}

The likelihood contribution for observation $(i,t)$ is given by the
probability of the observed action $d_{it}$ conditional on the state
$\st$. The log-likelihood function is therefore
\begin{equation}
  \ell(\theta,\gamma)
  =
  \sum_{i,t}
  \log P(d_{it} \mid \st;\theta,\gamma).
\end{equation}

A key feature of the estimation procedure is that the Bellman equation
is solved using the decomposition result established in
Section~\ref{sec:decomposition}. Rather than solving a single dynamic
program on the full joint state space, the value function is computed
as a collection of independent group-level solutions. These group-level
value functions are then combined to construct the CCPs used in the
likelihood.

Parameter estimates are obtained by maximizing the log-likelihood
function with respect to $(\theta,\gamma)$ using a derivative-free
optimization routine. At each parameter evaluation, the Bellman
equation is solved to convergence using value function iteration.

Standard errors are computed using two approaches. Asymptotic standard errors are obtained from the inverse Hessian of the objective function at the optimum. A block bootstrap procedure resamples cages, preserving the within-group dependence structure induced by spatial interactions.

\section{Results}
\label{sec:results4}

This section presents the parameter estimates and computational implications, evaluates evidence for localized coordination through likelihood-based inference and specification tests, and examines the robustness of the interaction effects across alternative specifications. The results show that replacement decisions respond systematically to nearby maintenance and failure activity, providing strong evidence of localized coordination in infrastructure replacement behavior.

\subsection{Structural Estimates}

Table~\ref{tab:spatial_structural_params} reports parameter estimates
for the baseline model and the decomposed structural model.

\begin{table}[H]
\centering
\caption{Baseline and Structural Spatial Model Estimates}
\label{tab:spatial_structural_params}
\begin{tabular}{lcc}
\toprule
Parameter & Baseline & Structural Spatial \\
\midrule
Age $\times$ Cage 0 ($\theta_{\text{0}}$) & -0.0063 & -0.0060 \\
Age $\times$ Cage 1 ($\theta_{\text{1}}$) & -0.0089 & -0.0085 \\
Age $\times$ Cage 2 ($\theta_{\text{2}}$) & -0.0198 & -0.0183 \\
Failure Event ($\theta_{\text{fail}}$)  & -8.7718 & -8.7453 \\
Replacement Cost ($\theta_{\text{RC}}$) & -9.3384 & -9.3352 \\
Spatial lag effect ($\gamma_{lag}$) & --- & -0.4314 \\
Spatial failure effect ($\gamma_{fail}$) & --- & -0.4184 \\
\midrule
Negative log-likelihood & 8955.90 & 8886.75 \\
AIC    & 17921.81 & 17787.49 \\
BIC  & 17976.02 & 17863.39 \\
McFadden pseudo-$R^2$ & 0.591 & 0.594 \\
Observations            & 378,124 & 378,124 \\
States                  & 354 & 1,416 \\
\bottomrule
\end{tabular}
\end{table}

The estimated spatial interaction coefficients are economically large
and statistically significant. The coefficient on lagged neighboring
replacement activity, $\hat{\gamma}_{lag} = -0.4314$,
implies that recent nearby replacement activity reduces the relative
utility of continued operation and increases the probability of
replacement. Similarly,
$\hat{\gamma}_{fail} = -0.4184$
indicates that neighboring failure realizations also increase
replacement incentives within the local operating environment.

The age coefficients are negative across all cage environments, implying that continuation values decline as GPUs age. The magnitude of deterioration differs systematically across cage positions, with the hottest environments (cage 2) exhibiting substantially steeper depreciation effects than cooler operating environments (cage 0). This pattern is consistent with persistent heterogeneity in thermal operating conditions across the system architecture. Failure realizations also strongly reduce continuation values, while the replacement cost parameter reflects the substantial fixed cost associated with replacement activity.

The interaction parameters are also negative, implying that both local failures and prior period replacement activity increase current replacement incentives. These estimates provide evidence that replacement incentives respond systematically to nearby operating and maintenance conditions. Operationally, these interaction effects may reflect shared maintenance access, common operating environments, or economies of scale in clustered hardware servicing.

The spatial model improves the negative log-likelihood from 8955.90 to 8886.746 over 378,124 observations while introducing only two additional interaction parameters. The improvement in fit indicates that nearby replacement and failure activity contain economically meaningful information about forward-looking replacement behavior beyond own-unit deterioration alone. The results further suggest that localized operating conditions and maintenance activity play an important role in shaping infrastructure replacement incentives.

Likelihood-based model selection criteria also improve substantially. Relative to the non-spatial specification, the spatial model lowers AIC by more than 130 points and BIC by more than 110 points despite the additional interaction parameters. These improvements provide strong evidence that localized interaction effects contribute systematically to replacement behavior.

The increase in McFadden pseudo-$R^2$ is more modest in absolute terms, which is typical in structural dynamic discrete choice environments where persistent own-state deterioration already explains much of the baseline replacement behavior. The remaining improvement therefore reflects additional explanatory power arising from localized coordination effects.

More importantly, these interaction effects are estimated directly within the Bellman equation through full maximum likelihood estimation rather than through auxiliary simulation or reduced-form approximation. At the scale of the Titan system, solving the corresponding joint spatial dynamic program without decomposition would be computationally infeasible. 

\subsubsection{Empirical Computational Performance}

The decomposition makes fully structural estimation feasible for the
1,416-state spatial model studied in the Titan application.
Table~\ref{tab:runtime} compares estimation runtimes for the hybrid
NFXP--MSM estimator of \citet{diamond2025economics} and the fully
structural likelihood-based estimator developed here.

\begin{table}[H]
\centering
\caption{Estimation runtime comparison}
\label{tab:runtime}
\begin{tabular}{lccc}
\toprule
Estimator & Runtime & Iterations & s/eval \\
\midrule
Hybrid NFXP--MSM & 1{,}139 seconds & 276 & 4.13 \\
Structural NFXP (decomposed) & 39 seconds & 772 & 0.05 \\
\bottomrule
\end{tabular}
\end{table}

Although the fully structural estimator requires more optimizer iterations, each evaluation is substantially faster because it avoids the forward simulation required by the MSM objective. Both estimators use the same decomposed Bellman solution, so the runtime difference reflects the additional simulation burden imposed by the MSM criterion rather than differences in the dynamic programming step itself.

The computational savings are especially important for procedures that require repeated model solution. Although counterfactual analysis and inference are theoretically possible under simulation-based estimators such as NFXP–MSM, repeated model solution and simulation at Titan scale become operationally costly in practice. The decomposed likelihood-based framework substantially reduces this burden, making bootstrap inference, large-scale counterfactual analysis, and repeated policy simulation computationally feasible.

\subsection{Inference and Identification}

The precision of these estimates is confirmed by both asymptotic and bootstrap standard errors, reported in Table~\ref{tab:spatial_structural_se}. Bootstrap standard errors are computed using 100 resampled estimation runs. Despite the high dimensionality of the state space (1,416 states), the spatial interaction coefficients remain stable and statistically significant at conventional levels across both error specifications.

\begin{table}[!h]
\centering
\caption{Structural spatial replacement model estimates and standard errors}
\label{tab:spatial_structural_se}
\begin{threeparttable}
\begin{tabular}{lccc}
\toprule
Parameter & Estimate & Asymptotic SE & Bootstrap SE \\
\midrule
Age $\times$ Cage 0 ($\theta_{\text{0}}$) & $-0.0060$*** & $0.0011$ & $0.0011$ \\
Age $\times$ Cage 1 ($\theta_{\text{1}}$) & $-0.0085$*** & $0.0011$ & $0.0010$ \\
Age $\times$ Cage 2 ($\theta_{\text{2}}$) & $-0.0183$*** & $0.0015$ & $0.0016$ \\
Failure Event ($\theta_{\text{fail}}$)           & $-8.7453$***   & $0.1917$   & $0.3866$   \\
Replacement Cost ($\theta_{\text{RC}}$)         & $-9.3352$***   & $0.4537$   & $0.4378$   \\
Spatial lag ($\gamma_{lag}$)  & $-0.4314$** & $0.1481$ & $0.1748$ \\
Spatial failure ($\gamma_{fail}$) & $-0.4184$*** & $0.1528$ & $0.1455$ \\
\bottomrule
\end{tabular}

\begin{tablenotes}
\small
\item \textit{Notes:} Inference uses bootstrap standard errors from 100 resampled runs.
\item *** $p<0.01$, ** $p<0.05$, * $p<0.10$.
\end{tablenotes}
\end{threeparttable}
\end{table}

While the failure cost standard error nearly doubles under the bootstrap (0.1917 to 0.3866), this likely reflects the relative rarity of failure events. A bootstrap that resamples by cage may occasionally omit critical failure clusters, leading to higher variance in that specific parameter. However, the consistency of the spatial parameters across methods implies that the coordination effect is a global feature of the data and is not driven by outlier cages or inferential framework.

Bootstrap standard errors are generally similar to asymptotic standard errors, indicating that the spatial interaction effects are not driven by a small number of cages or local failure clusters.

\subsection{Likelihood-Based Identification and Model Comparison}

The spatial model is tested against the baseline via a likelihood ratio
statistic:
\begin{equation}
  \mathrm{LR}
  =
  -2\left[
    \ell(\hat{\theta}_{\mathrm{baseline}})
    -
    \ell(\hat{\theta}_{\mathrm{spatial}}, \hat{\gamma})
  \right]
  \sim \chi^2(2)
  \quad \text{under } H_0\colon (\gamma_{lag},\gamma_{fail})=(0,0).
  \label{eq:lr-test}
\end{equation}

The spatial specification decisively rejects the null of no localized interaction effects. The ratio statistic comparing the spatial and non-spatial models is 138.31 ($p < 0.001$), indicating that neighboring replacement and failure activity contain substantial information about replacement incentives beyond own-unit deterioration alone.


Figure~\ref{fig:likelihood_surface} illustrates the local geometry of the
likelihood function around the spatial interaction parameters
$(\gamma_{lag}, \gamma_{fail})$, holding the remaining parameters fixed
at their structural estimates.

\begin{figure}[H]
\centering
\includegraphics[width=\textwidth]{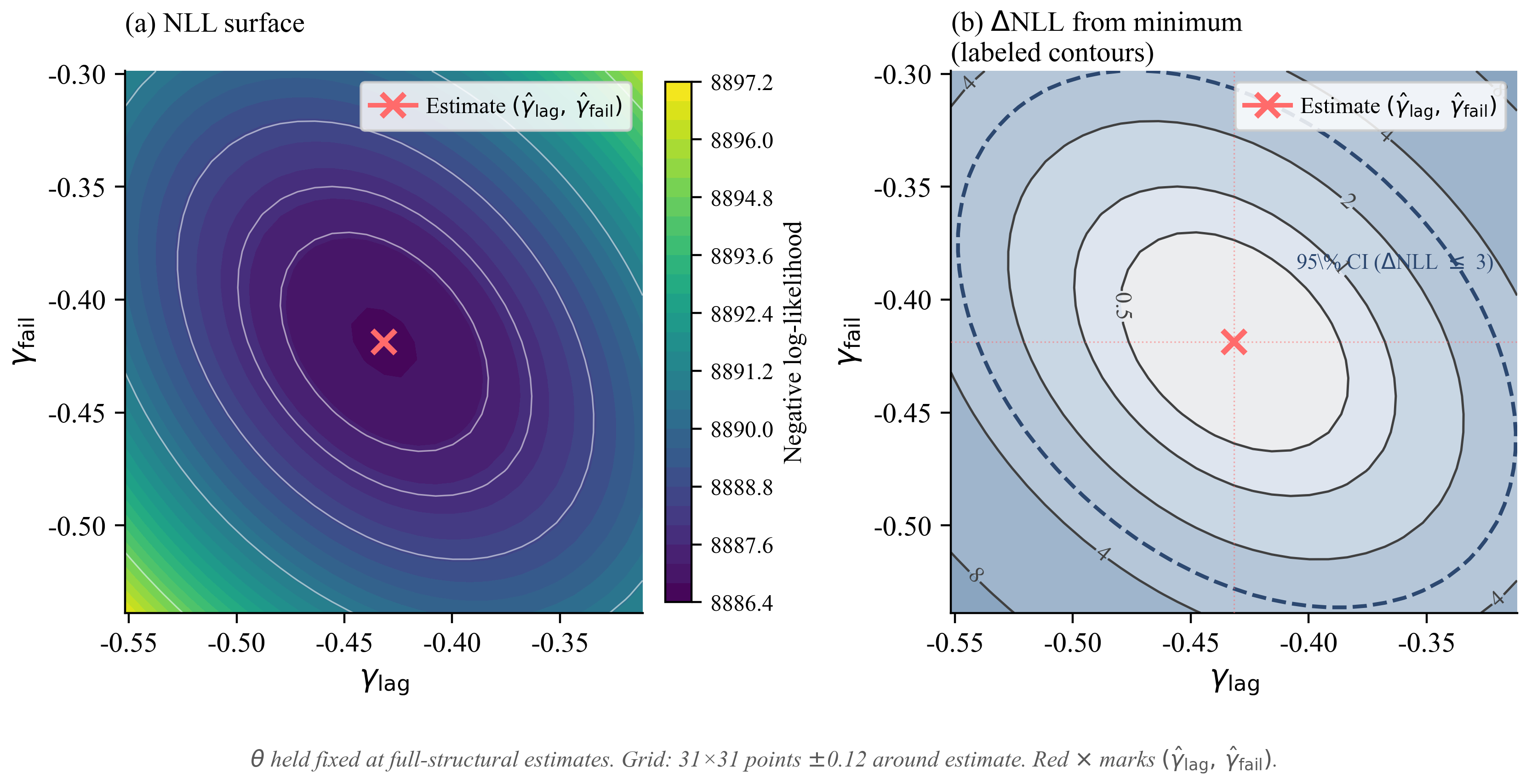}
\caption{Likelihood surface and contour map for the spatial interaction parameters $\gamma_{lag}$ and $\gamma_{fail}$.}
\label{fig:likelihood_surface}
\end{figure}

Panel (a) shows the negative log-likelihood surface evaluated on a grid around the estimated parameters. The likelihood exhibits a well-defined minimum near the estimated values, indicating stable identification of the spatial interaction parameters.

Panel (b) reports contours of the change in negative log-likelihood
relative to the minimum. The dashed contour corresponds to the
95\% likelihood region under the $\chi^2(2)$ approximation. The estimated parameter vector lies near the center of this region, confirming that the estimator has located the global optimum and that the likelihood surface is locally well behaved.

The elliptical shape of the likelihood contours indicates moderate
correlation between the two spatial interaction parameters. This pattern is expected because both parameters capture related forms of local maintenance coordination.

\subsection{Mechanisms of Local Coordination}

To assess the mechanisms underlying spatial coordination,
Table~\ref{tab:structural_channel_compare} reports restricted structural models in which only one interaction channel is active at a time. The lag-only specification sets $\gamma_{fail}=0$, while the fail-only specification sets $\gamma_{lag}=0$. Both models are estimated under the full structural likelihood.

\begin{table}[!h]
\centering
\caption{Structural comparison of spatial channel specifications}
\label{tab:structural_channel_compare}
\begin{tabular}{lccc}
\toprule
Model & $\hat{\gamma}_{lag}$ & $\hat{\gamma}_{fail}$ & NLL \\
\midrule
Fail-only & 0.0000 & -0.4712 & 8919.05 \\
Lag-only  & -0.4799 & 0.0000 & 8915.63 \\
Two-gamma (full) & -0.4314 & -0.4184 & 8886.75 \\
\bottomrule
\end{tabular}
\end{table}
Both restricted specifications recover negative spatial coefficients of similar magnitude. The fail-only model yields $\hat{\gamma}{fail} = -0.4712$, while the lag-only model yields $\hat{\gamma}{lag} = -0.4799$. These results indicate that each interaction channel independently generates substantial coordination in replacement behavior.

In terms of model fit, the lag-only specification achieves a slightly
lower negative log-likelihood than the fail-only model
(8915.63 vs.\ 8919.05). While this suggests that sequential replacement
activity provides a marginally better explanation of the data, the
difference in fit is small, and the estimated coefficients are of nearly
identical magnitude.

These results indicate that spatial coordination operates through both channels. Replacement decisions respond not only to activity nearby failure conditions, but also to recent neighborhood replacement. Neither mechanism alone fully captures the observed patterns, and their similar magnitudes suggest that coordination arises through both sequential maintenance activity and contemporaneous local reliability signals.

Persistent environmental heterogeneity is absorbed through cage-specific deterioration effects and fixed group structure. The spatial interaction effects are therefore identified from within-cage variation in replacement incentives conditional on local operating environments. In particular, the lagged replacement effect cannot be explained solely by contemporaneous thermal shocks because it operates through prior-period maintenance activity within the local neighborhood.

\subsection{Nonlinear Interaction Effects: Intensity of Local Activity}

The baseline specification models spatial interactions using binary indicators for local activity, indicating whether any neighboring unit experienced a recent replacement or failure. While this captures the presence of coordination effects, it imposes that the marginal impact of a single neighboring event is identical to that of multiple events. This restriction may be overly strong in environments with clustered maintenance activity.

To relax this restriction, interaction effects are allowed to vary with the intensity of local activity. Specifically, the binary indicators are replaced with binned measures distinguishing between one neighboring event and multiple neighboring events. The resulting flow utility adjustment is:

\begin{equation}
\Delta U_i =
\gamma_{r1} \cdot \mathbf{1}\{R_{i,t-1}=1\}
+ \gamma_{r2+} \cdot \mathbf{1}\{R_{i,t-1}\ge 2\}
+ \gamma_{f1} \cdot \mathbf{1}\{F_{i,t}=1\}
+ \gamma_{f2+} \cdot \mathbf{1}\{F_{i,t}\ge 2\},
\end{equation}

where $R_{i,t-1}$ denotes neighboring replacement activity in the previous period and $F_{i,t}$ denotes neighboring failures in the current period, with the latter defined excluding the focal unit.

\begin{table}[h!]
\centering
\caption{Interaction Effects by Intensity}
\label{tab:intensity}
\begin{tabular}{lc}
\toprule
 & Estimate \\
\midrule
$\gamma_{r1}$ (1 lagged replacement) & -0.252 \\
$\gamma_{r2+}$ ($\geq$2 lagged replacements) & -0.887 \\
$\gamma_{f1}$ (1 failure) & -0.375 \\
$\gamma_{f2+}$ ($\geq$2 failures) & -0.479 \\
\bottomrule
\end{tabular}
\end{table}

All coefficients are negative, indicating that both neighboring replacements and failures increase the relative payoff to replacement. The estimates exhibit a clear monotone pattern in interaction intensity: the effect of two or more lagged neighbor replacements is substantially larger in magnitude than that of a single replacement, and a similar pattern holds for failures, albeit more moderately.

The stronger gradient for lagged replacement activity relative to failures suggests that realized maintenance activity provides a sharper coordination signal than contemporaneous failure shocks. Observed replacements directly reveal active intervention and potential cost-sharing opportunities, while failures provide a noisier signal of underlying operating conditions.

This difference is reinforced by persistent heterogeneity in operating environments. Because cage conditions remain stable over time, lagged replacement activity partly reflects enduring local deterioration patterns, while contemporaneous failures capture more transitory shocks. Lagged replacement activity therefore contains stronger information about local operating conditions and potential coordination opportunities.

Allowing for intensity-dependent interaction effects improves model fit relative to the binary specification. The negative log-likelihood decreases by approximately 31 points, indicating that the additional flexibility captures systematic variation in replacement behavior. At the same time, the core structural parameters governing age dependence and failure effects remain stable, implying that the nonlinear interaction terms refine rather than overturn the baseline economic structure.

These results show that spatial coordination is not purely binary but
increases with the intensity of local activity. However, the modest
improvement in fit indicates that the binary specification captures the
primary margin of interaction.

The intensity effects point to nonlinear coordination mechanisms,
consistent with threshold or fixed-cost structures in maintenance
operations. Accordingly, the binary model is retained as the baseline,
with the intensity specification serving as a robustness check.

As a further robustness check, a more flexible specification is
estimated that distinguishes between one, two, and three or more
neighboring events. Despite the additional parameters, this specification
delivers only modest improvements in fit and yields less stable
estimates, with little structure beyond the monotonic pattern identified
above.

The estimated effects remain increasing in intensity but are not
precisely distinguished across higher bins, indicating that the
difference between two and three or more neighboring events is weakly
identified. This favors the more parsimonious specification that groups
higher levels of activity into a single category.

\subsection{Sensitivity to Discounting}

This section examines the sensitivity of the full structural estimates
to the choice of discount factor $\beta$. The baseline specification
sets $\beta = 0.90$, consistent with a moderate degree of forward-looking
behavior at a monthly frequency. To assess robustness, we re-estimate the model at $\beta = 0.85$ and $\beta = 0.95$. Table~\ref{tab:beta_robustness} reports the resulting parameter
estimates. 

\begin{table}[!h]
\centering
\caption{Sensitivity of Full Structural Estimates to Discount Factor $\beta$}
\label{tab:beta_robustness}
\begin{tabular}{lccc}
\toprule
 & $\beta=0.85$ & $\beta=0.90$ & $\beta=0.95$ \\
\midrule
NLL
& 8836.49 & 8886.75 & 8961.62 \\

\midrule
$\gamma_{\text{lag}}$ 
& -0.4306 & -0.4314 & -0.4343 \\

$\gamma_{\text{fail}}$ 
& -0.4168 & -0.4184 & -0.4189 \\

\midrule
Age $\times$ Cage 0 ($\theta_{\text{0}}$) 
& -0.0082 & -0.0060 & -0.0040 \\

Age $\times$ Cage 1 ($\theta_{\text{1}}$)
& -0.0120 & -0.0085 & -0.0052 \\

Age $\times$ Cage 2 ($\theta_{\text{2}}$) 
& -0.0254 & -0.0183 & -0.0118 \\

Failure Event ($\theta_{\text{fail}}$) 
& -8.7732 & -8.7453 & -8.7149 \\

Replacement Cost ($\theta_{\text{RC}}$) 
& -9.1214 & -9.3352 & -9.7928 \\
\bottomrule
\end{tabular}
\end{table}

The results exhibit two key patterns. First, the estimated spatial interaction parameters are highly stable across discount factors. Across all specifications, $\gamma_{\text{lag}}$ and $\gamma_{\text{fail}}$ remain close to $-0.43$ and $-0.42$,
respectively. This stability indicates that
the estimated coordination effects are not sensitive to assumptions
about intertemporal discounting and are primarily identified from
cross-sectional and short-run variation in the data.

Second, the own-state utility parameters adjust systematically with the
discount factor. As $\beta$ increases, the magnitude of the age-related
cost parameters declines, while the replacement cost parameter becomes
more negative.

The likelihood values vary across specifications, with lower values at higher discount factors, reflecting differences in overall model fit. However, these changes do not materially affect the estimated spatial interaction effects.

These findings indicate that the core substantive finding
of the paper—namely, the presence of economically meaningful spatial coordination in replacement decisions—is robust to reasonable variation in the discount factor. The choice of $\beta = 0.90$ therefore provides a convenient baseline without driving the main conclusions.


\section{Counterfactual Analysis}
\label{sec:cf_framework}

This section describes the simulation framework used to evaluate the
economic implications of spatial coordination. The analysis compares
predicted system outcomes across alternative models while holding the
underlying stochastic environment fixed.

\subsection{Simulation Framework}

For each estimated model, replacement behavior is characterized by a
policy function
\[
\pi(d \mid \st) = P(d=1 \mid \st),
\]
obtained as the fixed point of the model-specific Bellman equation.

Counterfactual outcomes are generated through forward simulation. Starting from an initial state distribution, the system evolves according to the estimated policy function and transition process. In each period, replacement decisions are drawn from the model-implied conditional choice probabilities, and states are updated using the estimated transition kernel.

The transition process is held fixed across simulations. Age evolves deterministically conditional on the replacement decision, and failure realizations follow the empirical hazard process estimated from the data. This ensures that differences across counterfactual scenarios are driven by differences in decision rules rather than changes in the stochastic environment.

Simulations are initialized from the empirical state distribution observed in the data, including age, failure status, and cage position. The simulation horizon is set to T = 36 periods, corresponding to a three-year analysis window.

Reported outcomes reflect the evolution of the system under each model’s
policy function over this fixed horizon.

\subsection{Replacement Dynamics Under Spatial Coordination}

Before turning to policy counterfactuals, it is useful to compare the dynamic implications of the estimated models. Although the spatial model improves in-sample fit relative to the baseline specification, the more important distinction is behavioral: the two models generate systematically different replacement patterns because they respond differently to local operating conditions.

The primary comparison is between two specifications:

\begin{enumerate}
    \item \textit{Baseline model:} Replacement decisions depend
    only on age, failure status, and cage position, with no localized
    interaction effects.

    \item \textit{Structural spatial model:} Replacement decisions additionally depend on local replacement and failure activity within the same cage.
\end{enumerate}

Both models share the same transition process and differ only in the policy function governing replacement behavior. Differences in simulated outcomes therefore reflect localized interaction effects in the structural decision rule.

The presence of spatial interaction effects changes the timing of
replacement decisions. In the baseline model, replacement occurs when
the expected continuation value of operating an aging unit falls below
the replacement value, driven primarily by age and failure risk. In the
spatial model, this decision margin is shifted by local conditions.

Lagged local replacement activity lowers the continuation value of operating a unit, increasing the likelihood of replacement in periods following local maintenance activity. Similarly, neighboring failures increase replacement incentives by revealing localized reliability deterioration within the operating environment.

Figure~\ref{fig:replacement_paths} compares simulated replacement
dynamics across the two models. 

\begin{figure}[H]
    \centering
    \includegraphics[width=\textwidth]{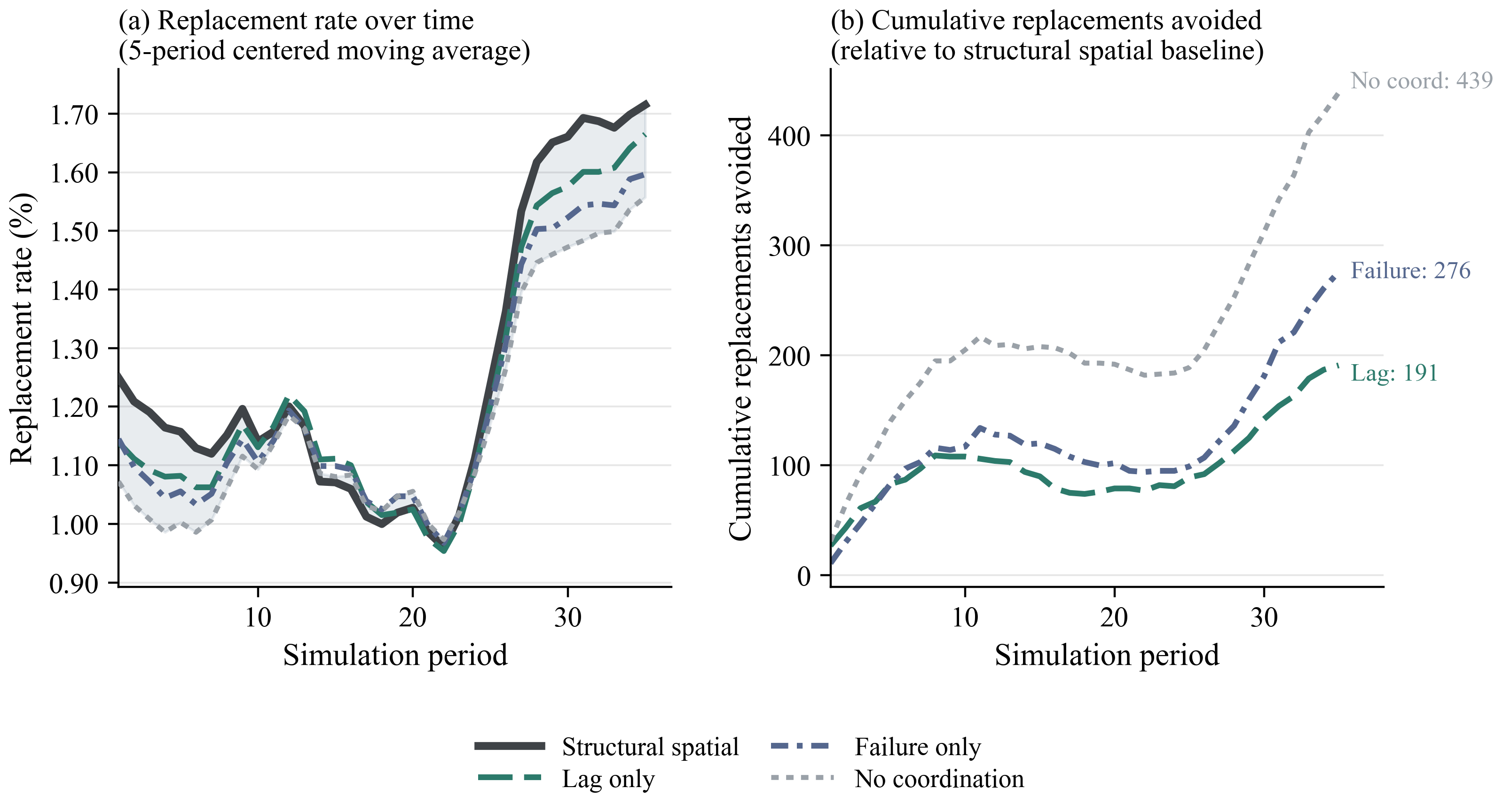}
    \caption{Replacement dynamics across the baseline and structural spatial models.}
    \label{fig:replacement_paths}
\end{figure}

The differences in replacement timing are reflected in the implied replacement hazard rates. In the baseline model, the hazard rate increases smoothly with age, reflecting gradual deterioration and increasing failure risk.

The spatial model modifies this pattern by introducing shifts in the hazard rate conditional on local activity. Replacement hazards increase discretely following periods of neighbor replacement or failure, creating deviations from the smooth age profile implied by the baseline model. These shifts generate localized spikes in replacement activity that are not captured by age-based deterioration alone.

A key implication of the spatial model is the clustering of maintenance activity within groups. Because incentives respond to local conditions, interventions tend to occur in sequences rather than in isolation. This clustering arises endogenously from the interaction structure of the model rather than from exogenous shocks.
\subsection{Coordination Channel Counterfactuals}

To isolate the contribution of each coordination mechanism, counterfactual
simulations selectively shut down the corresponding interaction terms in
the spatial model while holding all other structural parameters and the
transition process fixed.

\begin{itemize}
    \item \textit{No coordination:} Set
    $(\gamma_{lag}, \gamma_{fail}) = (0,0)$.

    \item \textit{Lag-only coordination:} Set $\gamma_{fail} = 0$.

    \item \textit{Failure-only coordination:} Set $\gamma_{lag} = 0$.
\end{itemize}

Because the transition environment is unchanged across simulations,
differences in outcomes are driven entirely by changes in the structural
decision rule.

Each simulation tracks replacement rates, failure rates, and mean equipment age, along with cumulative maintenance activity, cumulative failures, and total discounted costs. Total discounted costs are computed as the expected present value of replacement and failure costs under the model-implied policy, capturing the trade-off between preventive replacement and realized failures.

Figure~\ref{fig:decomp_counts} reports the decomposition of replacement dynamics across coordination mechanisms.

\begin{figure}[!h]
    \centering
    \includegraphics[width=0.9\textwidth]{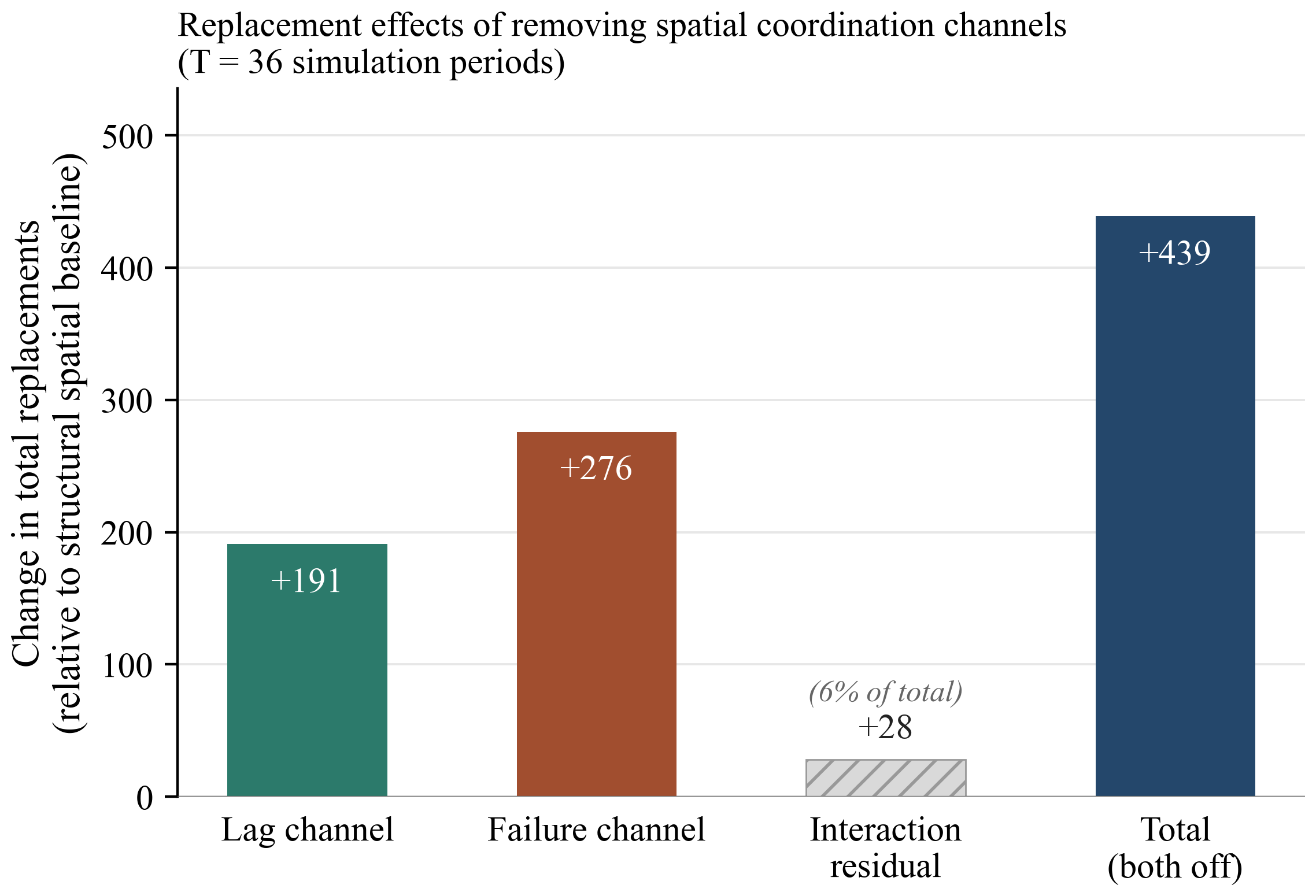}
    \caption{Decomposition of replacement dynamics under counterfactual restrictions on spatial interaction parameters.}
    \label{fig:decomp_counts}
\end{figure}

Removing the lagged replacement mechanism reduces total replacements by 191 units relative to the full spatial model, while removing the failure mechanism reduces replacements by 276 units. The larger failure effect suggests that local reliability shocks are the primary source of coordinated replacement behavior in the Titan system.

The interaction residual is comparatively small, accounting for only 6\% of the total effect. This indicates that the two mechanisms operate largely independently and that most coordinated replacement behavior is driven by additive responses to local maintenance activity and nearby failures rather than by higher-order interaction effects.

\subsection{Total Discounted Costs}

To evaluate the economic implications of spatial coordination, it is useful to distinguish between two comparisons: (i) differences across models and (ii) counterfactual changes within the structural spatial model itself. Costs are normalized using the Titan-era NVIDIA K20X GPU MSRP of \$7,699 to provide an interpretable mapping from utility units into dollar-denominated comparisons. The resulting values should therefore be interpreted as illustrative operational cost comparisons rather than exact accounting measures.

Relative to the non-spatial baseline specification, the fully structural spatial model achieves lower total discounted costs through more efficient replacement timing and a lower incidence of failures. By incorporating localized operating conditions and shared maintenance environments, the spatial model captures coordination effects that improve system performance at empirically observed levels of interaction.

Figure~5 shows that the relationship between coordination and system performance is not monotonic. The structural model improves replacement timing and reduces failures relative to the non-spatial baseline by incorporating local operating conditions into replacement decisions. Coordination therefore improves efficiency at empirically observed levels of interaction.

However, the within-model counterfactuals show that stronger spillovers are not necessarily optimal. Eliminating coordination channels reduces costs relative to the spatial baseline, while stronger spillovers induce earlier and more clustered replacement activity that raises discounted costs when additional capital expenditures outweigh marginal reductions in failure risk. The results therefore suggest that the observed level of coordination reflects a balance between preventive maintenance incentives and replacement costs rather than maximal coordination.

\begin{figure}[H]
    \centering
    \includegraphics[width=0.9\textwidth]{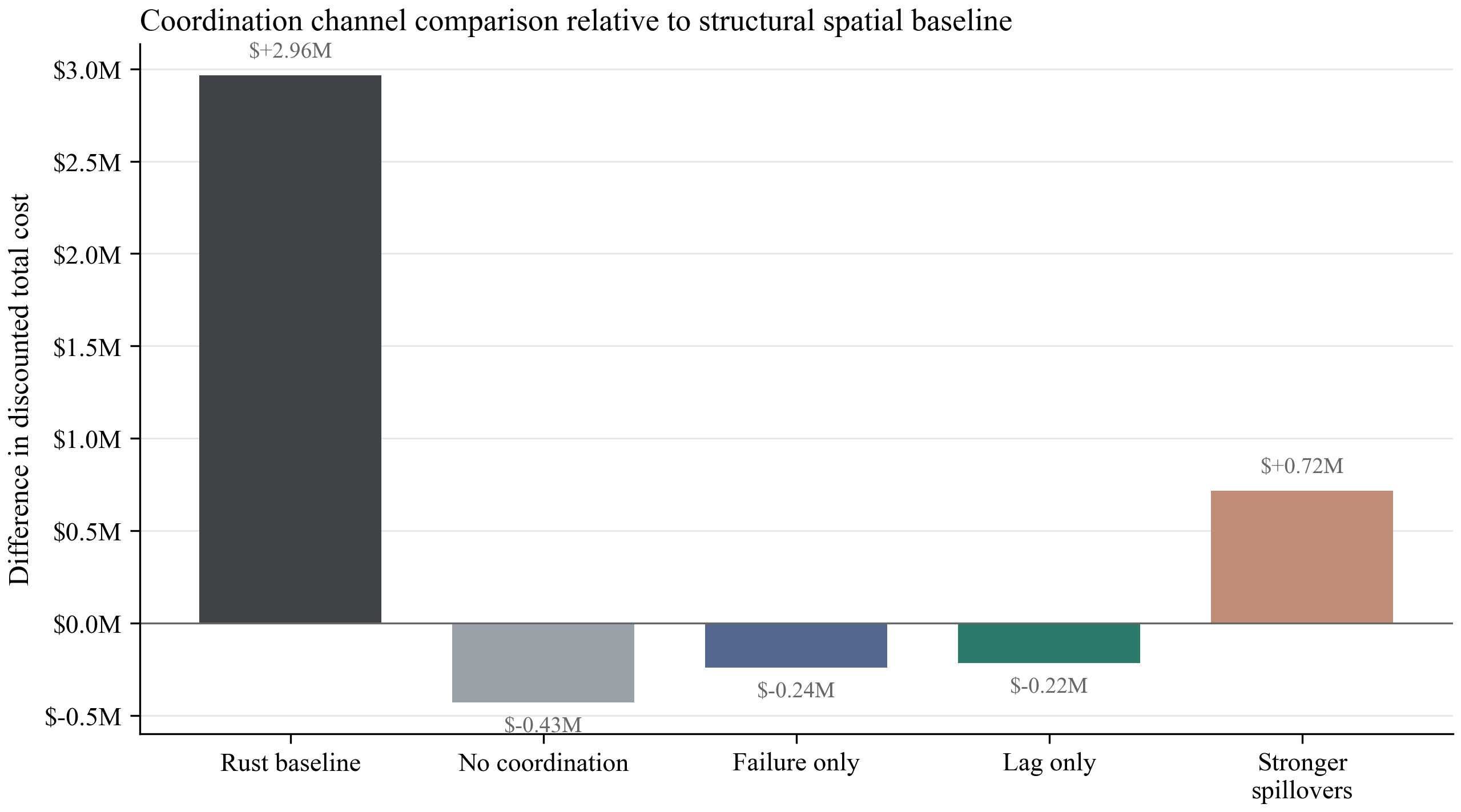}
    \caption{Total discounted cost differences under counterfactual restrictions on spatial coordination channels relative to the structural spatial baseline. Positive values indicate higher costs relative to the baseline.}
    \label{fig:channel_tvm}
\end{figure}

These differences are economically important because models that ignore spatial interactions attribute replacement behavior entirely to idiosyncratic aging and failure processes. By contrast, the spatial model captures how local operating conditions and neighboring activity jointly shape replacement incentives, generating materially different predictions for failure rates, equipment age, replacement clustering, and total costs.

Although the counterfactual cost comparisons are normalized using Titan-era GPU MSRP, the implied magnitudes become substantially larger in modern hyperscale computing environments. Modern AI accelerators can cost tens of thousands of dollars per unit, with deployments involving hundreds of thousands of GPUs. At these scales, even modest improvements in replacement timing or failure management can translate into operational differences measured in tens or hundreds of millions of dollars over the lifecycle of a large deployment. More broadly, the economic importance of localized coordination is likely to increase alongside the scale, capital intensity, and operational interdependence of modern infrastructure systems.

\section{Conclusion}

This paper develops an exact decomposition result for structural dynamic discrete choice models with localized interaction effects and fixed group structure. Under these conditions, the Bellman problem separates into independent group-level subproblems while preserving equilibrium value and policy functions exactly. The decomposition restores tractable estimation and allows inference and policy analysis for interaction-rich structural models at infrastructure scale.

Applying the framework to GPU replacement decisions in the Titan supercomputer reveals economically meaningful coordination in maintenance behavior. Replacement incentives not only respond to own-unit deterioration and failure risk, but also to neighboring maintenance activity and local reliability conditions. These interaction effects materially alter predicted replacement behavior, failure rates, and discounted operating costs relative to non-spatial models imposing conditional independence.

The counterfactual analysis shows that coordination has a non-monotonic effect on system performance. At empirically observed levels, coordination improves efficiency by incorporating localized operating conditions and shared maintenance environments into replacement decisions. However, stronger spillovers are not necessarily optimal because excessive responsiveness to local activity induces over-replacement and higher capital expenditures. The results therefore suggest that observed coordination behavior reflects a balance between preventive maintenance incentives and replacement costs rather than maximal coordination.

Although the empirical application focuses on Titan, the economic implications extend far beyond a single supercomputing environment. Modern hyperscale computing systems now involve deployments of hundreds of thousands of AI accelerators costing tens of thousands of dollars per unit, implying infrastructure investments measured in the billions of dollars. At these scales, even modest improvements in replacement timing, reliability management, or maintenance coordination can generate economically meaningful differences in system performance and capital expenditure.

Similar coordination problems arise in telecommunications networks, electricity infrastructure, transportation fleets, distributed energy systems, and other networked capital environments with persistent local operating conditions. Further, the framework provides a foundation for richer structural models of heterogeneous infrastructure systems. Future work could incorporate multiple interacting asset types, endogenous network formation, strategic interaction across decentralized operators, or equilibrium feedback effects such as congestion and endogenous pricing. 

This paper shows that stable localized dependence fundamentally changes the computational structure of large-scale infrastructure replacement problems. Exploiting this structure makes it possible to estimate dynamic models with interaction effects directly within the Bellman equation while preserving the original economic problem exactly. As infrastructure systems become more networked and capital intensive, estimating interaction-rich structural models at operational scale will become increasingly important for infrastructure planning and reliability management.

\newpage

\appendix

\section{Additional Derivations}

\subsection{Matrix Illustration of Exact Decomposition}
\label{app:matrix-decomposition}

Consider two fixed groups, indexed by $g \in \{0,1\}$, each with two
states. For action $d$, suppose the controlled transition matrix is
block diagonal:
\[
P_d
=
\begin{bmatrix}
P_d^{(0)} & 0 \\
0 & P_d^{(1)}
\end{bmatrix}
=
\begin{bmatrix}
p_{11}^{(0)} & p_{12}^{(0)} & 0 & 0 \\
p_{21}^{(0)} & p_{22}^{(0)} & 0 & 0 \\
0 & 0 & p_{11}^{(1)} & p_{12}^{(1)} \\
0 & 0 & p_{21}^{(1)} & p_{22}^{(1)}
\end{bmatrix}.
\]

Let the value vector be stacked by group:
\[
V
=
\begin{bmatrix}
V_1^{(0)} \\
V_2^{(0)} \\
V_1^{(1)} \\
V_2^{(1)}
\end{bmatrix}
=
\begin{bmatrix}
V^{(0)} \\
V^{(1)}
\end{bmatrix}.
\]

Then the expected continuation value is
\[
P_d V
=
\begin{bmatrix}
p_{11}^{(0)}V_1^{(0)}
+
p_{12}^{(0)}V_2^{(0)}
+
0\cdot V_1^{(1)}
+
0\cdot V_2^{(1)}
\\[0.5em]
p_{21}^{(0)}V_1^{(0)}
+
p_{22}^{(0)}V_2^{(0)}
+
0\cdot V_1^{(1)}
+
0\cdot V_2^{(1)}
\\[0.5em]
0\cdot V_1^{(0)}
+
0\cdot V_2^{(0)}
+
p_{11}^{(1)}V_1^{(1)}
+
p_{12}^{(1)}V_2^{(1)}
\\[0.5em]
0\cdot V_1^{(0)}
+
0\cdot V_2^{(0)}
+
p_{21}^{(1)}V_1^{(1)}
+
p_{22}^{(1)}V_2^{(1)}
\end{bmatrix}.
\]
\[
P_d V
=
\begin{bmatrix}
p_{11}^{(0)}V_1^{(0)} + p_{12}^{(0)}V_2^{(0)} \\
p_{21}^{(0)}V_1^{(0)} + p_{22}^{(0)}V_2^{(0)} \\
p_{11}^{(1)}V_1^{(1)} + p_{12}^{(1)}V_2^{(1)} \\
p_{21}^{(1)}V_1^{(1)} + p_{22}^{(1)}V_2^{(1)}
\end{bmatrix}.
\]

Equivalently, this can be computed group by group:
\[
P_d^{(0)}V^{(0)}
=
\begin{bmatrix}
p_{11}^{(0)} & p_{12}^{(0)} \\
p_{21}^{(0)} & p_{22}^{(0)}
\end{bmatrix}
\begin{bmatrix}
V_1^{(0)} \\
V_2^{(0)}
\end{bmatrix},
\qquad
P_d^{(1)}V^{(1)}
=
\begin{bmatrix}
p_{11}^{(1)} & p_{12}^{(1)} \\
p_{21}^{(1)} & p_{22}^{(1)}
\end{bmatrix}
\begin{bmatrix}
V_1^{(1)} \\
V_2^{(1)}
\end{bmatrix}.
\]

Thus,
\[
P_d V
=
\begin{bmatrix}
P_d^{(0)}V^{(0)} \\
P_d^{(1)}V^{(1)}
\end{bmatrix}.
\]

The decomposed computation is algebraically identical to the joint computation. The only omitted operations involve off-diagonal zero blocks implied by fixed group membership and localized transition dependence.

\bibliographystyle{plainnat}
\bibliography{references}

\end{document}